
\documentclass[letterpaper, 10 pt, conference]{ieeeconf}  

\IEEEoverridecommandlockouts                              

\overrideIEEEmargins                                      


\usepackage{todonotes}

\usepackage{xcolor}
\usepackage{psfrag}
\usepackage{placeins}

\usepackage{graphics} 
\usepackage{epsfig} 
\usepackage{times} 
\usepackage{amsmath} 
\usepackage{amssymb}  
\usepackage{mathtools}
\usepackage{array}
\usepackage{bm}
\usepackage{tikz}
\usetikzlibrary{calc,arrows}
\usepackage{pgfplots}
\usepackage{subcaption}
\usepackage{etex}
\usepackage{siunitx}
\usepackage{url}
\usepackage{arydshln} 
\usepackage[ruled,vlined,linesnumbered]{algorithm2e}
 
\usepackage{enumitem} 

\usepackage[noend]{algpseudocode}

\makeatletter
\def\BState{\State\hskip-\ALG@thistlm}
\makeatother

\DeclareMathOperator*{\diagonal}{diag}


\setlength{\textfloatsep}{4pt} 
\setlength{\intextsep}{8pt} 

\setlength{\voffset}{0.2cm}

\title{Distributed Learning Model Predictive Control for Linear Systems*}

\author{Yvonne R.\ St\"urz$^{1}$, Edward L.\ Zhu$^{1}$, Ugo Rosolia$^{2}$, Karl H.\ Johansson$^{3}$, Francesco Borrelli$^{1}$ 
\thanks{$^{1}$The authors are with the Model Predictive Control Laboratory, University of California, Berkeley, CA 94709, USA. Email address: %
        {\tt\small y.stuerz@berkeley.edu }} 
\thanks{$^{2}$The author is with the Department of Mechanical and Civil Engineering,
California Institute of Technology, Pasadena, CA 91125 USA}
\thanks{$^{3}$The author is with the 
School of Electrical Engineering and Computer Science, KTH Royal Institute of Technology, Stockholm, Sweden}
\thanks{*This project has received funding from the European Union’s Horizon 2020 research and innovation programme under the Marie Sklodowska-Curie grant agreement No.\ 846421.}}%

\def\qed {{
		\parfillskip=0pt 
		\widowpenalty=10000 
		\displaywidowpenalty=10000 
		\finalhyphendemerits=0 
		%
		\leavevmode 
		\unskip 
		\nobreak 
		\hfil 
		\penalty50 
		\hskip.2em 
		\null 
		\hfill 
		$\blacksquare$
		%
		\par}}

\newtheorem{myrem}{Remark}
\newtheorem{myas}{Assumption}

\newtheorem{mythe}{Theorem}

\newcommand{\bma}[2]       {\left[\begin{array}{#1}#2\end{array}\right]}	
\newcommand{\mrm}[1]       {\mathrm{#1}}

\newcommand{\diag}[1]      {\mrm{{\diagonal}}\!\left(#1\right)}

\newcommand{\trans}        {\top}																				  

\newcommand{\V}{V}
\newcommand{\iterit}{l}

\newcommand{\iterDE}{r}







\newcommand{\N}[0]{\mathcal{N}} 
\newcommand{\Ni}[0]{\mathcal{N}_i} 
\newcommand{\iter}[0]{q} 
\renewcommand{\SS}[0]{\mathcal{SS}} 
\newcommand{\CS}[0]{\mathcal{CS}} 
\newcommand{\lv}{s} 
\newcommand{\dm}{\lambda} 
\newcommand{\ccv}{\alpha} 
\newcommand{\LMPC}{\mathrm{LMPC}}
\newcommand{\M}{T}

\newcommand{\agentIdxA}{i}
\newcommand{\agentIdxB}{j}

\newcommand{\numAgent}{M}

\begin{document}

\maketitle
\thispagestyle{empty}
\pagestyle{empty}

\begin{abstract}
	This paper presents a distributed learning model predictive control (DLMPC) scheme for distributed linear time invariant systems with coupled dynamics and state constraints.	
	The proposed solution method is based on an online  distributed optimization scheme with nearest-neighbor communication. 
	If the control task is iterative and data from previous feasible iterations are available, local data are exploited by the subsystems in order to construct the local terminal set and terminal cost, which guarantee recursive feasibility and asymptotic stability, as well as performance improvement over iterations. 
	In case a first feasible trajectory is difficult to obtain, or the task is  non-iterative, we further propose an algorithm that efficiently explores the state-space and generates the data required for the construction of the terminal cost and terminal constraint in the MPC problem in a safe and distributed way. 
	In contrast to other distributed MPC schemes which use structured positive invariant sets, the proposed approach involves a control invariant set as the terminal set, on which we do not impose any distributed structure. The proposed iterative scheme converges to the global optimal solution of the underlying infinite horizon optimal control problem under mild conditions. 
	Numerical experiments demonstrate the effectiveness of the proposed DLMPC scheme. 
\end{abstract}

\section{Introduction}

Complex systems composed of multiple subsystems are present in many control applications. 
The large scale and spatial distribution of these systems often make the control by a centralized unit intractable due to limitations in computation and communication. 
Research has therefore focused on proposing design schemes for local controllers which compute control actions for the individual subsystems based on only local information in decentralized schemes, and on communicated information from neighboring subsystems in distributed control schemes. 
One line of research has focused on exploiting the interconnection structure of the system in order to design interconnected controllers based on a convex reformulation involving linear matrix inequalities in a scalable way \cite{Stuerz2018a}. 
If constraints need to be accounted for, distributed model predictive control (DMPC) techniques can be employed. 
They can mainly be categorized into non-cooperative, such as tube-based  \cite{Riverso2011a}, and cooperative schemes \cite{Camponogara2002, Giselsson2014, Conte2016, Darivianakis2019}.
The latter often involve distributed optimization techniques  \cite{Bertsekas1989} where the subsystems communicate local information and agree on a solution, thus solving the optimization problem  cooperatively.

The main challenge in DMPC schemes is to enable distributed computation by decomposing the optimization problem into subproblems for the individual subsystems. 
Most of the DMPC approaches in the literature therefore impose the distributed structure of the system on the terminal set and cost function of the MPC problem \cite{Riverso2011a, Giselsson2014, Conte2016, Farokhi2014, Trodden2017, Lucia2015}. In particular, they first design a structured terminal controller and cost based on Lyapunov stability and then design structured positive invariant sets under this terminal controller, satisfying the constraints. 
Two aspects in these schemes can lead to conservatism: (1) Imposing structure on the terminal controllers and terminal sets, and (2) computing positive invariant sets for one specific choice of terminal controller which is fixed in the design phase, lead to a possibly small inner approximation of the maximal control invariant set. 
In order to mitigate the conservatism introduced by the imposed structure, some works have proposed to adapt the terminal sets based on the states of the subsystems in operation \cite{Trodden2017, Lucia2015}, 
\cite{Conte2016, Simon2014}. 
In \cite{Darivianakis2019}, the stabilizing terminal controller is also computed online within the MPC problem. 

In \cite{Rosolia2017}, a data-driven MPC scheme, Learning MPC (LMPC), was introduced, where previously seen data are exploited in order to construct the terminal components of the MPC problem. In \cite{Rosolia2019} this framework was extended to uncertain systems, and it was shown how the LMPC scheme can be used to iteratively enlarge the domain of the policy. 

In this paper, we propose a distributed LMPC (DLMPC) scheme, which is a significant extension to \cite{Rosolia2017}. 
The contributions of the paper are the following: 
\begin{itemize}
	\item We present a novel DLMPC scheme for linear systems able to handle coupled dynamics, coupled state constraints and coupled cost functions. 
	The main improvement w.r.t.\ existing DMPC approaches is fully distributed computations without imposing any structure on the terminal cost function or constraint set. 
This is achieved by exploiting previously seen local data by the individual subsystems in order to build local data driven terminal sets and terminal cost functions. A consensus on specific parameters in the construction of the local costs and constraints is achieved by distributed optimization which guarantees that the local terminal sets are a control invariant set and the sum of the local terminal cost functions is a Lyapunov function for the global system. 
This can considerably reduce conservatism w.r.t.\ DMPC  schemes that rely on finding a positive invariant terminal set under a fixed structured stabilizing terminal controller. 
	\item For iterative control tasks, given a first feasible trajectory, 
	the proposed scheme provides recursive feasibility and asymptotic stability. 
	Furthermore, we prove that the proposed DLMPC has a non-increasing control performance over  iterations and, under mild conditions,  converges to the global centralized optimal solution. 
	\item For non-iterative control tasks, or if an initial feasible trajectory is difficult to obtain, 
	we further present an algorithm that by iteratively performing the proposed DLMPC scheme with changing starting conditions leads to an enlargement of the domain of the DLMPC policy. 
	This can be used to safely explore the state space and to generate the required data in a sample efficient and distributed way. 
\end{itemize}

The paper is structured as follows. 
Section~\ref{sec:problemformulation} introduces the model of the distributed systems and the control task considered. 
Section~\ref{sec:LMPC} provides a brief review of the LMPC in \cite{Rosolia2017a}.  A decomposed formulation of the LMPC according to the distributed system structure is presented in  Section~\ref{sec:DLMPCformulation}. The fully distributed solution is proposed in Section~\ref{sec:ADMM}. Section~\ref{sec:numerics} provides numerical experiments before Section~\ref{sec:con} concludes the paper.

\textbf{Notation} 
Let $\mathbb{R}$ denote the set of real numbers. $\mathbb{N}$ and $\mathbb{N}_+$ denote the set of non-negative and positive natural numbers. 
We denote the transpose of a vector $v \in \mathbb{R}^{n}$ as $v^\top$, and its Euclidean norm as $\| v \|$. 
The matrix $M = \diag{M_1, ..., M_m}$ is the block-diagonal matrix with submatrices $M_i$ on its diagonal. 
The symbol $\succcurlyeq$ is used to indicate elementwise inequality. 
The identitiy matrix of dimension $n$ is denoted as $I_n$ and the vector of all ones is denoted as $\boldsymbol{1}$. 

\section{Problem Formulation}
\label{sec:problemformulation}

In this section, we present the model of the distributed systems considered in this paper, and then state the control problem formulation. 

\subsection{Dynamically Coupled Constrained Linear Systems}
We consider the discrete-time linear time-invariant system with dynamics given by 
\begin{equation}\label{eq:sys_glob}
x_{t+1} = A x_t + B u_t,
\end{equation}
where $x_t \in \mathbb{R}^n$ and $u_t \in \mathbb{R}^m$ are the system state and input at time $t \in \mathbb{N}$. The system matrices $A$ and $B$ are assumed to be known. The system states and inputs are subject to linear constraints 
\begin{equation}
\begin{aligned}
x_t \in \mathcal{X}, \qquad \qquad u_t \in \mathcal{U}, 
\end{aligned}
\end{equation}
which are formulated as 
\begin{equation}
\begin{aligned}
G x_t \leq g, \qquad \qquad L u_t \leq l, 
\end{aligned}
\end{equation}
with $G$, $L$, $g$, and $l$ given matrices and vectors, respectively.

We consider systems in \eqref{eq:sys_glob} which have a structure that admits a decomposition into  subsystems $\N = \{1, ... , \numAgent\}$ which may be coupled in their state dynamics. 
The state of the $i$th subsystem is $x_{i,t} \in \mathbb{R}^{n_i}$,
and we assume that the $i$th input $u_{i,t} \in \mathbb{R}^{m_i}$ affects only the $i$-th state. 
Thus, the system states and inputs are partitioned as 
\begin{equation}\label{eq:xu_loc}
x_t = \bma{@{}c@{\,\,}c@{\,\,\,\,}c@{}}{x_{1,t}^{\top} & \hdots &  x_{M,t}^{\top}}^\top, \quad 
u_t = \bma{@{}c@{\,\,}c@{\,\,\,\,}c@{}}{u_{1,t}^{\top} & \hdots &  u_{M,t}^{\top}}^\top,
\end{equation}

For each subsystem $i \in \N$, we define the set of neighboring subsystems $\Ni \subseteq \N$ which contains all those subsystems that are coupled to  subsystem $i$ over the dynamics, constraints or cost. We define the state vector $x_{\Ni,t} \in \mathbb{R}^{n_{\Ni}}$ containing the local states of subsystem $\agentIdxA$ and its neighboring subsystems in $\Ni$, which can be expressed as $x_{\Ni,t} = X_{\Ni} x_{t}$, with $X_{\Ni}$ being a projection matrix, i.e., a binary matrix $X_{\Ni} \in \{0,1\}^{n_{\Ni}\times n}$. 
Similarly, the projection matrices $X_i \in \{0,1\}^{n_{i}\times n}$ and $U_i \in \{0,1\}^{m_{i}\times m}$, are defined such that $x_{i,t} = X_i x_t$ and $u_{i,t} = U_i u_t$. The dynamics of subsystem $i$ is then given as 
\begin{equation}\label{eq:sys_loc}
x_{i,t+1} = \sum_{j \in \Ni} A_{ij} x_{j,t} + B_i u_{i,t}, \quad \forall i \in \mathcal \N,  
\end{equation} 
with 
\begin{equation}
A_{\Ni} = X_i A X_{\Ni}^\top, \quad B_i = X_i B U_i^\top. 
\end{equation}
The local state and input constraints are defined as 
\begin{equation}\label{eq:const_loc}
\begin{aligned} 
&x_{\Ni,t} \in \mathcal{X}_{\Ni} = \{ x_{\Ni,t} \in \mathbb{R}^{n_{\Ni}}: G_{\Ni} x_{\Ni} \leq g_{\Ni} \}, \\
&u_{i,t} \in \mathcal{U}_{i} = \{u_{i,t} \in \mathbb{R}^{m_i}: L_{i} u_i \leq l_i \}, 
\end{aligned}
\end{equation}
with $L_{i} = U_i L U_i^\top$ and $l_{i} = U_i l$, and  
$G_{\Ni} = X_{\Ni} G X_{\Ni}^\top$ and $g_{\Ni} = X_{\Ni} g$.

\subsection{Control Problem Formulation} 
Let us consider system \eqref{eq:sys_glob}. We are given an iterative task, where the trajectories of the subsystems start at the same initial states at each iteration. We will discuss the case of non-iterative tasks in Section~\ref{sec:DE}.  
In the following, we denote the iteration by a superscript $\iter$ and the initial state at iteration $\iter$ by 
\begin{equation}
x_{i,0}^\iter = x_{i,S}, \quad \forall i \in \N, 
\end{equation}
where the overall initial state $x_0^\iter = x_S$ is defined as a stacked vector similar to \eqref{eq:xu_loc}.

The goal is to solve the following infinite horizon optimal control problem (IHOCP) at each iteration  
\begin{equation}\label{eq:IHOCP}
\begin{aligned}
J_{0 \rightarrow \infty}^*(x_{S}) = \min_{u_0, u_1, ...} 
\quad & \sum_{t=0}^{\infty} 
h(x_{t},u_{t})  \\
\text{s.t.} \quad \ & \ x_{t+1} = A x_{t} + B u_{t}, \ \forall t \geq 0 \\
&\ x_{t} \in \mathcal{X}, \ \forall t \geq 0 \\
&\ u_{t} \in \mathcal{U}, \ \forall t \geq 0 \\
&\ x_{0} = x_S.
\end{aligned}
\end{equation}

In the following, we consider problems that involve decomposable stage costs in \eqref{eq:IHOCP}, i.e., where $h(x_t,u_t)$ is given as a sum of local stage costs $h_i(x_{\Ni,t},u_{i,t})$ as 
\begin{equation}\label{eq:J_glob}
h(x_t,u_t) = \sum_{i=1}^{M} h_i(x_{\Ni,t},u_{i,t}). 
\end{equation}
We assume that the local stage costs $h_i(\cdot,\cdot)$ are continuous, jointly convex and satisfy 
\begin{equation}
\begin{cases*}
h_i(x_{\Ni,F},0)=0, \\
h_i(x_{\Ni,t}^\iter,u_{i,t}^\iter) \geq 0, \,\,\,\,\,\, 
\forall \ x_{\Ni,t}^\iter \in \mathbb{R}^{n_{\Ni}} \!\! \setminus \!\!  \{x_{\Ni,F}\}, \,\, \\
\qquad \qquad \qquad \quad \qquad \forall \ u_{i,t}^\iter \in \mathbb{R}^{m_i} \!\! \setminus \!\! \{0\},
\end{cases*}
\end{equation} 
where the final state $x_F$ 
is a feasible equilibrium for system \eqref{eq:sys_glob} under no input, i.e., $A x_F = x_F$.

\begin{myrem}\label{rem:separablecost}
    While the local stage costs $h_i(x_{\Ni,t},u_{i,t})$ can account for coupling between the subsystems, 
this formulation includes the special case of completely separable cost functions with local stage costs given as $h_i(x_{i,t},u_{i,t})$. 
\end{myrem}

\begin{myrem}\label{rem:QR}
A specific choice of the stage cost $h(x_t,u_t)$ can be the quadratic function 
$$h(x_t,u_t) = x_{t}^\top Q x_{t} + u_{t}^\top R u_{t},$$ 
with positive semi-definite and positive definite weighting matrices $Q \in \mathbb{R}^{n \times n}$ and $R \in \mathbb{R}^{m \times m}$, respectively. In this case, the local stage costs are given by  
\begin{equation*}
h_i(x_{\Ni,t},u_{i,t}) = x_{\Ni,t}^\top Q_{\Ni} x_{\Ni,t} + u_{i,t}^\top R_i u_{i,t},  
\end{equation*} 
with $Q_{\Ni}$ and $R_{i}$ such that the global weighting matrices $Q$ and $R$ are given by 
$Q = \sum_{i \in \N} X_{\Ni}^\top Q_{\Ni} X_{\Ni}$ and 
$R = \sum_{i \in \N} U_{i}^\top R_{i} U_{i}$. 
A completely separable quadratic stage cost is then defined as 
\begin{equation}\label{eq:hi_loc_uncoup}
h_i(x_{i,t},u_{i,t}) = x_{i,t}^\top Q_i x_{i,t} + u_{i,t}^\top R_i u_{i,t}, 
\end{equation} 
with $Q_{i}$ such that $Q = \sum_{i \in \N} X_{i}^\top Q_{i} X_{i}$ and $R_i$ as before. 
\end{myrem}

\section{Background on LMPC}
\label{sec:LMPC}

We review the LMPC problem formulation for the global system in \eqref{eq:sys_glob} from \cite{Rosolia2017a}. 
For this, we define the vectors that collect all inputs applied to system \eqref{eq:sys_glob} and its resulting states for all time steps $t$ of iteration $\iter$ as   
\begin{equation} \label{eq:xu}
\begin{aligned}
\textbf{u}^\iter &= [u_0^{\iter \top}, u_1^{\iter \top}, ..., u_t^{\iter \top}, ...]^\top, \\
\textbf{x}^\iter &= [x_0^{\iter \top}, x_1^{\iter \top}, ..., x_t^{\iter \top}, ...]^\top. 
\end{aligned}
\end{equation} 

\subsection{Convex Safe Set}
In order to guarantee stability of MPC laws, an $N$-step controllable set to a control invariant set can be used. Computing such a set is usually numerically challenging or even intractable for nonlinear systems or large scale distributed systems. 
To alleviate this problem, we will as \cite{Rosolia2017} exploit previously seen trajectories that successfully completed the iterative task. Since they represent a subset of the maximal stabilizable set, the sampled safe set $\SS^\iter$ is defined over the realized trajectories of the system from previous iterations 
\begin{equation}\label{eq:SS}
\SS^\iter = \left\{ 
	\bigcup_{l \in \M^\iter} \bigcup_{t=0}^{\infty} x_t^l, 
	\right\}, 
\end{equation}
where $\M^\iter$ collects all iteration indices from previous successful iterations, i.e., which were feasible and converged to $x_F$, defined as 
\begin{equation}
\M^\iter = \left\{ l \in [0,\iter] : \lim\limits_{t \rightarrow \infty} x_t^l = x_F \right\}. 
\end{equation}

Because of the convexity of the constraints $\mathcal{X}$ and $\mathcal{U}$, any convex combination of the elements in the safe set $\SS^\iter$ is again a control invariant set for system \eqref{eq:sys_glob}, i.e., for any element in the convex safe set 
\begin{equation}\label{eq:CS}
\begin{aligned}
&\CS^\iter = \mathrm{conv}(\SS^\iter) \\
&= \left\{ \sum_{l\in T^\iter}\sum_{t=0}^{\infty}  \alpha_t^l x_t^l : \,\, \alpha_t^l \geq 0, \,\,  \sum_{l\in T^\iter}\sum_{t=0}^{\infty}  \alpha_t^l = 1, \,\, x_t^l \in \SS^\iter \right\}, 
\end{aligned}
\end{equation} 
there exists a sequence of control inputs that steers the system \eqref{eq:sys_glob} to $x_F$ \cite{Borrelli2003}. 
If all previous successful trajectories are taken into account, then it holds that the sets are growing over the iterations, i.e., $\M^{\iter-1} \subseteq \M^\iter$ and therefore 
\begin{equation}\label{eq:CS_growth}
	\CS^{\iter-1} \subseteq \CS^\iter. 
\end{equation} 

\subsection{Terminal Cost} 
For the $\iter$th realized trajectory $\textbf{x}^\iter$ and associated input sequence $\textbf{u}^\iter$ in \eqref{eq:xu}, the cost-to-go from time $t$ onwards is given by 
\begin{equation}
J_{t \rightarrow\infty}^\iter (x_t^\iter) = \sum_{k=t}^\infty h(x_{k}^\iter,u_{k}^\iter).
\end{equation}
The performance of the $\iter$th trajectory is defined as the cost from time $t=0$, i.e., 
\begin{equation}
J_{0\rightarrow\infty}^\iter(x_0^\iter) = \sum_{t=0}^\infty h(x_{t}^\iter,u_{t}^\iter). 
\end{equation}

The barycentric function \cite{Jones2010} is used as the terminal cost in the LMPC for linear systems in \cite{Rosolia2017}. 
It is defined as 
\begin{equation}\label{eq:cc}
\begin{aligned}
\V^{\iter,*}(x) = \min_{\boldsymbol{\ccv}^\iter} 
\quad &\sum_{\iterit=0}^{\iter} \sum_{t=0}^{\infty} \ccv_t^\iterit J_{t \rightarrow \infty}^\iterit (x_t^\iterit) \\
\mathrm{s.\,t.}~~~ &\sum_{\iterit=0}^{\iter} \sum_{t=0}^{\infty} \ccv_t^\iterit = 1 \\ 
&\sum_{\iterit=0}^{\iter} \sum_{t=0}^{\infty} \ccv_t^\iterit x_t^\iterit = x, \\
&\ccv_t^\iterit \geq 0, \quad \forall t \in \mathbb{N}, 
\end{aligned}
\end{equation}
with $x_t^\iterit$ being the realized state at time $t$ of the $\iterit$th iteration, and where 
$\boldsymbol{\ccv}^\iter$ comprises all $\ccv^\iterit_t, \forall \iterit \in \{0,...,\iter\}, \forall t \in \mathbb{N}$. 
The function $\V^{\iter,*}$ thus assigns to every point in the convex safe set the corresponding convex combination of minimum costs-to-go along the previous trajectories in the safe set. 

\begin{myrem} 
In practical applications, 
the iterations will have a finite time duration. For simplicity, we adopt the infinite time formulation in this paper. 
\end{myrem}

An LMPC \cite{Rosolia2017} for a centralized linear system then solves at each time step $t$ the following finite horizon optimal control problem (FHOCP), 
\begin{align}
J_{t \rightarrow t+N}^{\LMPC,\iter}(x_{t}^\iter) =\notag\\
\min_{\mathbf{x}_{t,N},\mathbf{u}_{t,N-1}, \boldsymbol{\alpha}^{\iter-1}} 
& \left[ 
\sum_{k=t}^{t+N-1}
h(x_{k|t},u_{k|t}) + \V^{\iter-1,*}(x_{k+N|t}) 
\right] \notag\\
\text{s.t.} \quad \quad & \ x_{k+1|t} = A x_{k|t} + B u_{k|t},  \notag\\
&\ x_{k|t} \in \mathcal{X},  \notag\\
&\ u_{k|t} \in \mathcal{U}, \ k = t, ..., t\!+\!N\!-\!1 \label{eq:LMPC_glob}\\
&\ x_{t|t} = x_t^\iter, \notag\\
&\ x_{t+N|t} \in \CS^{\iter -1},\notag
\end{align}
with 
\begin{equation}\label{eq:xu_glob}
\begin{aligned}
\mathbf{x}_{t,N} &= [x_{t|t}^\top,...,x_{t+N|t}^\top]^\top, \\
\mathbf{u}_{t,N-1} &= [u_{t|t}^\top,...,u_{t+N-1|t}^\top]^\top.
\end{aligned}
\end{equation}
Let us denote the optimal solution to \eqref{eq:LMPC_glob} by 
\begin{equation}\label{eq:xu_opt}
\begin{aligned}
\mathbf{x}_{t,N}^* &= [x_{t|t}^{*\top},...,x_{t+N|t}^{*\top}]^\top, \\
\mathbf{u}_{t,N-1}^* &= [u_{t|t}^{*\top},...,u_{t+N-1|t}^{*\top}]^\top. 
\end{aligned}
\end{equation}
At time $t$, the first input is applied to the system, i.e., $u_t^\iter = u_{t|t}^{*\iter}$, and the problem \eqref{eq:LMPC_glob} is solved again for the next time step in a receding horizon fashion. 

Under the assumption that at iteration $\iter = 1$ the convex safe set is non-empty, i.e., $\CS^{\iter - 1} = \CS^0 \neq \emptyset$, recursive and iterative feasibility, asymptotic stability and non-decreasing performance over the iterations are proved in \cite{Rosolia2017a}.

\section{DLMPC}
\label{sec:DLMPCformulation}

In the following, we present the problem formulation of DLMPC, 
which extends the LMPC approach to distributed systems. 

Let us consider the coupled constrained linear distributed system from \eqref{eq:sys_glob}. 
We define the vectors that collect all inputs applied to subsystem $\agentIdxA$ in \eqref{eq:sys_loc} and its resulting states for all time steps $t$ of iteration $\iter$ as   
\begin{equation} \label{eq:xu_loc_traj}
\begin{aligned}
\textbf{u}_\agentIdxA^\iter &= [u_{\agentIdxA,0}^{\iter \top}, u_{\agentIdxA,1}^{\iter \top}, ..., u_{\agentIdxA,t}^{\iter \top}, ...]^\top, \\
\textbf{x}_\agentIdxA^\iter &= [x_{\agentIdxA,0}^{\iter \top}, x_{\agentIdxA,1}^{\iter \top}, ..., x_{\agentIdxA,t}^{\iter \top}, ...]^\top. 
\end{aligned}
\end{equation} 
We further define the local sampled safe sets for subsystems $i \in \N$ over the realized trajectories of the subsystem from all successful previous iterations up to $\iter$ as 
\begin{equation}\label{eq:SS_i}
\SS_{\agentIdxA}^\iter = \left\{ 
\bigcup_{l \in \M^\iter} \bigcup_{t=0}^{\infty} x_{\agentIdxA,t}^l, 
\right\}, 
\end{equation}
with $\M^\iter$ as defined before for \eqref{eq:SS}. 
Moreover, we note that we can decompose the safe set from \eqref{eq:CS} into the following local convex safe sets
\begin{equation}\label{eq:CS_local} 
\begin{aligned}
\CS_{\agentIdxA}^\iter 
= \Bigg\{ &\sum_{l\in T^\iter}\sum_{t=0}^{\infty} \ccv_{\agentIdxA,t}^\iterit x_{\agentIdxA,t}^\iterit : \\ 
&\ccv_{\agentIdxA,t}^\iterit \geq 0, \,\,  \sum_{l\in T^\iter}\sum_{t=0}^{\infty} \ccv_{\agentIdxA,t}^\iterit = 1, \,\, x_{\agentIdxA,t}^\iterit \in \SS_\agentIdxA^\iter \Bigg\}, 
\end{aligned}
\end{equation} 
where the coefficients $\ccv_{\agentIdxA,t}^\iterit, \forall \iterit \in \{0,...,\iter-1\}, \forall t \geq 0$ will be optimized over in problems 
\eqref{eq:LMPC_loc} and \eqref{eq:consensus}.

We note the following relation of the convex safe set $\CS^\iter$ in \eqref{eq:CS} and the local convex safe sets $\CS_\agentIdxA^\iter$ in \eqref{eq:CS_local}, which will be important for the decomposition of the problem in \eqref{eq:LMPC_glob}:  
\begin{equation}\label{eq:CSCSi}
    \begin{aligned}
    &x = [x_1^\top, ... ,x_M^\top]^\top \in \CS^\iter \iff \\
    &x_\agentIdxA \in \CS_\agentIdxA^\iter, 
    \quad \boldsymbol{\alpha}_\agentIdxA^{\iter} = \boldsymbol{\alpha}_\agentIdxB^{\iter}, \forall \agentIdxA \neq \agentIdxB, \agentIdxA, \agentIdxB \in \N, 
    \end{aligned}
\end{equation}
with 
$\boldsymbol{\ccv}_{\agentIdxA}^{\iter}$ comprising $\ccv_{\agentIdxA,t}^\iterit, \forall \iterit \in \{0,...,\iter-1\}, \forall t \geq 0$ in \eqref{eq:CS_local}.

Based on the assumption before that the global system is decomposable into $\numAgent$ coupled subsystems, the global LMPC problem in \eqref{eq:LMPC_glob} can equivalently be decomposed into the following subproblems 
\begin{align}\label{eq:LMPC_loc}
J_{\agentIdxA,t \rightarrow t+N}^{\LMPC,\iter}&(x_{\agentIdxA,t}^\iter) = \notag\\ 
\min_{\substack{ \mathbf{x}_{\mathcal{N}_\agentIdxA,t,N}, \notag\\ 
		\mathbf{u}_{\agentIdxA,t,N-1}, \notag\\ 
		\boldsymbol{\ccv}_{\agentIdxA}^{\iter-1}}}
& \Bigg[
\sum_{k=t}^{t+N-1}
h_\agentIdxA(x_{\mathcal{N}_\agentIdxA,k|t},u_{\agentIdxA,k|t}) \, + 
\V^{\iter-1,*}_{\agentIdxA}(x_{\agentIdxA,t+N|t}) 
\Bigg] \notag\\
\text{s.t.} \ & \ x_{\agentIdxA,k+1|t} = A_{\mathcal{N}_\agentIdxA} x_{\mathcal{N}_\agentIdxA, k|t} + B_\agentIdxA u_{\agentIdxA,k|t},  \notag\\
&\ x_{\mathcal{N}_\agentIdxA,k|t} \in \mathcal{X}_{\mathcal{N}_\agentIdxA},  \notag\\
&\ u_{\agentIdxA,k|t} \in \mathcal{U}_\agentIdxA, \ k = t,...,t\!+\!N\!-\!1 \\
&\ x_{\mathcal{N}_\agentIdxA, t|t} = x_{\mathcal{N}_\agentIdxA,t},  \notag\\
&\ x_{\agentIdxA,t+N|t} \in \CS_\agentIdxA^\iter, \notag
\end{align}
with 
\begin{equation}\label{eq:LMPC_loc_vars}
\begin{aligned}
\mathbf{x}_{\mathcal{N}_\agentIdxA,t,N} &= [x_{\mathcal{N}_\agentIdxA,t|t}^\top,...,x_{\mathcal{N}_\agentIdxA,t+N-1|t}^\top]^\top, \\
\mathbf{u}_{\agentIdxA,t,N-1} &= [u_{\agentIdxA,t|t}^\top,...,u_{\agentIdxA,t+N-1|t}^\top]^\top, 
\end{aligned}
\end{equation}
with 
$\boldsymbol{\ccv}_{\agentIdxA}^{\iter-1}$ comprising  $\ccv_{\agentIdxA,t}^\iterit, \forall \iterit \in \{0,...,\iter-1\}, \forall t \in \mathbb{N}$,  
and with 
$\V^{\iter-1,*}_{\agentIdxA}(x_{\agentIdxA,k+N|t})$ 
being defined as in \eqref{eq:cc}, but with 
$J_{t \rightarrow \infty}^\iterit (x_t^\iterit)$ 
replaced by 
$$
J_{\agentIdxA, t \rightarrow\infty}^\iterit (x_{\agentIdxA, t}^\iterit) = \sum_{k=t}^\infty h_\agentIdxA(x_{\Ni,k}^\iterit,u_{\agentIdxA, k}^\iterit).
$$

In order to guarantee that the decomposed problem in \eqref{eq:LMPC_loc} is an exact reformulation of the global problem in \eqref{eq:LMPC_glob}, 
i.e., to guarantee that they have the same solutions, the following consensus constraints need to be introduced 
\begin{equation}\label{eq:consensus}
\begin{aligned} 
\boldsymbol{\ccv}_{\agentIdxA}^{\iter-1} &= \boldsymbol{\ccv}_{\agentIdxB}^{\iter-1}, \quad\quad\quad\quad\quad \forall \agentIdxA, \agentIdxB \in \N, \agentIdxA \neq \agentIdxB, \\
\textbf{x}_{\Ni, t, N} &= X_{\Ni} \, \textbf{x}_{t,N}, ~\quad\quad\quad \forall \agentIdxA \in \N, ~ t \geq 0, 
\end{aligned}
\end{equation}
with $\textbf{x}_{t,N}$ the planned state trajectory of the global system as defined in \eqref{eq:xu_glob}. 
The consensus constraint in the first line of \eqref{eq:consensus} ensures the condition in \eqref{eq:CSCSi}, and the one in the second line ensures that overlapping parts of state variables from neighboring subsystems in $\textbf{x}_{\Ni, t, N}$, i.e., variables of different subsystems that have the same physical meaning, are the same. 
The local FHOCPs in \eqref{eq:LMPC_loc} are solved in a receding horizon fashion, i.e., the first local inputs $u_{\agentIdxA,t}^\iter = u_{\agentIdxA,t|t}^{*\iter}$ are applied to the subsystems at time $t$. The next section presents a distributed solution method to solve the subproblems \eqref{eq:LMPC_loc}.

\section{Distributed Synthesis for DLMPC}
\label{sec:ADMM}

In this section, we present a distributed solution method for the local decomposed subproblems in \eqref{eq:LMPC_loc} coupled over the consensus constraints in \eqref{eq:consensus}. 
Various distributed optimization algorithms can be employed \cite{Bertsekas1989}. 
We propose a distributed solution scheme based on the alternating direction method of multipliers (ADMM) because of its fast convergence in practice \cite{Farokhi2014,Boyd2010}. 
A consensus algorithm involving a central coordinator \cite{Boyd2010} could be implemented, which requires communication to every subsystem and therefore might not be tractable in practice. 
We propose a scheme, where only nearest-neighbor communication and no global coordination is required. A similar scheme has been presented before for distributed controller synthesis of large-scale systems in \cite{Stuerz2019b}.

\subsection{Distributed Synthesis for DLMPC}

Let us define 
the local variable vector of subsystem $\agentIdxA$ as 
\begin{equation}
\lv_{\agentIdxA} = [\textbf{x}_{\Ni,t,N}^\top, \,\, \boldsymbol{\ccv}_{\agentIdxA}^{\iter-1\top}, \,\, \textbf{u}_{\agentIdxA,t,N-1}^\top]^\top,   
\end{equation}
and the projection matrices $E_{\agentIdxA \agentIdxB}$, which project $\lv_\agentIdxA$ onto those variables over which a consensus needs to be achieved between subsystem $\agentIdxA$ and its neighboring subsystems $\agentIdxB \in \Ni$, 
i.e., $E_{\agentIdxA \agentIdxB} \lv_\agentIdxA = E_{\agentIdxB \agentIdxA} \lv_\agentIdxB$ is the consensus constraint from \eqref{eq:consensus} for subsystem $\agentIdxA$ with $\agentIdxB$. 
The decomposed problem in \eqref{eq:LMPC_loc} and \eqref{eq:consensus} can now be formulated as the following $\numAgent$ subproblems for all $i \in \N$ 
\begin{equation}\label{eq:LMPC_loci}
\begin{aligned}
\min_{\lv_\agentIdxA} ~&J_{\agentIdxA,t \rightarrow t+N}^{\iter}(\lv_\agentIdxA) + g_\agentIdxA(\lv_\agentIdxA) 
\\
\mathrm{s.t.~} &E_{\agentIdxA \agentIdxB} \lv_\agentIdxA = E_{\agentIdxB \agentIdxA} \lv_\agentIdxB, \quad \forall \agentIdxB \in \Ni.  
\end{aligned}
\end{equation}
with 
$J_{\agentIdxA,t \rightarrow t+N}^{\iter}(\lv_\agentIdxA)$ being the cost function in \eqref{eq:LMPC_loc}, and 
$g_\agentIdxA(\lv_\agentIdxA)$ being the indicator function for the constraints in \eqref{eq:LMPC_loc}, i.e., 
\begin{equation*} 
g_{\agentIdxA}(\lv_\agentIdxA) = 
\begin{cases}
0 ~\qquad \text{if~} \lv_\agentIdxA \text{~satisfies the constraints in \eqref{eq:LMPC_loc}}, \\
+\infty \quad \text{otherwise}.
\end{cases}
\end{equation*}

In order to derive the ADMM steps, we formulate the augmented Lagrangian, which allows a decomposition into the following sum of local terms 
\begin{equation}
\begin{aligned}
\mathcal{L}_{\rho} = 
\sum_{\agentIdxA \in \N} \mathcal{L}_{\rho, \agentIdxA}, 
\end{aligned}
\end{equation}
with 
\begin{equation}
\begin{aligned}
\mathcal{L}_{\rho,i} = & ~J_{\agentIdxA,t \rightarrow t+N}^{\iter}(\lv_\agentIdxA)
+ g_i(\lv_i) 
\\ & + 
\sum_{\agentIdxB \in \Ni} \left( \dm_{{ij}}^\trans \left( E_{\agentIdxA \agentIdxB} \lv_\agentIdxA - E_{\agentIdxB \agentIdxA} \lv_\agentIdxB \right) + \frac{\rho}{2}   \| E_{\agentIdxA \agentIdxB} \lv_\agentIdxA - E_{\agentIdxB \agentIdxA} \lv_\agentIdxB \|_2^2 \right). 
\end{aligned}
\end{equation}
The modified ADMM update steps are summarized in Algorithm~\ref{alg:ADMM}. 
The derivation can be found in \cite{Stuerz2019b}. 
The update steps require communication only between neighboring subsystems, i.e., subsystems that are coupled through their dynamics, constraints, or costs. 

\begin{algorithm}[t!]
    \SetAlgoLined
	\caption{Distributed computation of local input $u_{\agentIdxA,t}^\iter = u_{\agentIdxA,t|t}^{*\iter}$ for subsystem $\agentIdxA$ at time $t$ of iteration $\iter$}\label{alg:ADMM} 
	\KwIn{Iteration $\iter$, time $t$, $\rho > 0$, 
		set of neighboring subsystems $\Ni$,  current subsystems states $x_{i,t}^\iter$, initial values $\lv_i^{(0)}$, $\forall i \in \N$}  
	\For{$\agentIdxA \in \N$}{
	\textbf{Initialization}: Set $\kappa \gets 0$, $\dm_\agentIdxA^{(0)} \gets 0$\;  
	\While{\textit{not converged}}{
		Communicate $E_{\agentIdxA\agentIdxB} \, \lv_\agentIdxA^{(\kappa)}$ to neighboring nodes $\agentIdxB \in \Ni$\; 
		$\dm_\agentIdxA^{(\kappa + 1)} \gets \dm_\agentIdxA^{(\kappa)} + \rho \sum_{\agentIdxB \in \Ni}(T_{\agentIdxA\agentIdxB} \, \lv_i^{(\kappa)} - T_{\agentIdxB\agentIdxA} \, \lv_\agentIdxB^{(\kappa)})$\; 
		\vspace{-0.2cm} 
		\begin{equation*} 
		\begin{aligned} 
		\hspace{-0.3cm} 
		\lv_\agentIdxA^{(\kappa + 1)} \!\!\gets\!\!  & ~\underset{\lv_\agentIdxA}{\mathrm{argmin}} \Bigg\{ \!\! J_{\agentIdxA,t \rightarrow t+N}^{\iter}(\lv_\agentIdxA) 
		\!+\! g_\agentIdxA(\lv_\agentIdxA) 
		\!+\!  \lv_\agentIdxA^\top \dm_\agentIdxA^{(\kappa+1)}    \\ & + \!\rho\! \sum_{\agentIdxB \in \Ni}\| T_{\agentIdxA\agentIdxB} \, \lv_\agentIdxA - \frac{T_{\agentIdxA\agentIdxB} \, \lv_\agentIdxA^{(\kappa)}+ T_{\agentIdxB\agentIdxA} \, \lv_\agentIdxB^{(\kappa)}}{2} \|^2 \Bigg\}; %
		\end{aligned} 
		\end{equation*} 
		}
		$\kappa \gets \kappa+1$\; 
		Set $\lv_i^{*} = \lv_i^{\kappa}$\; 
		Return $u_{\agentIdxA,t|t}^{*\iter}$ from $\lv_i^{*}$\; 
		}
		\KwOut{local inputs $u_{\agentIdxA,t}^\iter$, ~ $\forall i \in \N$} 
\end{algorithm}

The DLMPC for iterative tasks, with distributed solution of the subproblems by  Algorithm~\ref{alg:ADMM}, is given in  Algorithm~\ref{alg:DLMPC}. 
\begin{algorithm}[t!]
    \SetAlgoLined
	\caption{DLMPC}\label{alg:DLMPC} 
		\KwIn{Initial states $x_{\agentIdxA,S}$, target states $x_{\agentIdxA,F}$,   
		sets of neighboring subsystems $\Ni$, initial successful feasible trajectories  $\mathbf{x}_{\agentIdxA,t}^{0},~ \mathbf{u}_{\agentIdxA,t}^{0}$ with $\CS_{\agentIdxA}^0$ and $J_{\agentIdxA, t \rightarrow \infty}^0(x_{\agentIdxA,t}), \forall x_{\agentIdxA,t} \in \mathbf{x}_{\agentIdxA,t}^{0}$,  $\forall i \in \N$, $\iter_\mathrm{max}$} 
		\For{iteration $\iter = 1$ to $\iter_\mathrm{max}$}{
		\For{$\agentIdxA \in \N$}{
		\While{$x_{\agentIdxA,t} \neq x_{\agentIdxA,F}$}{
		Solve local problem \eqref{eq:LMPC_loci} via Algorithm~\ref{alg:ADMM}\;  
		Apply local input $u_{\agentIdxA,t}^\iter = u_{\agentIdxA,t|t}^{*\iter}$\; 
		Obtain local state $x_{\agentIdxA,t}^\iter$\; 
		}
		Update $\CS_\agentIdxA^\iter$ by adding $\mathbf{x}_{\agentIdxA,t}^{*}$\;  
		Compute and save $J_{\agentIdxA, t \rightarrow \infty}^\iter(x_{\agentIdxA,t}^\iter), \forall x_{\agentIdxA,t}^\iter \in \mathbf{x}_{\agentIdxA,t}^{*}$\;
		}
		$\iter = \iter + 1$\; 
		}
	    \KwOut{Closed-loop trajectories  $\mathbf{x}_{\agentIdxA,t}^{*}, \, \mathbf{u}_{\agentIdxA,t}^{*}, \, \forall i \in \N$} 
\end{algorithm}

\subsection{Properties of the DLMPC} 

Next, we present our main result on the properties of Algorithm~\ref{alg:DLMPC}. We make the following assumptions. 
\begin{myas}\label{as:nonempty}
We have access to feasible trajectories 
$\textbf{x}_\agentIdxA^\iter$ at iteration $\iter = 0$ converging to $x_{\agentIdxA,F}$ for all subsystems $i \in \N$, and therefore the convex safe sets at iteration $\iter=1$,  $\CS_\agentIdxA^{\iter-1}=\CS_\agentIdxA^{0}$, are non-empty. 
\end{myas}
\begin{myas}\label{as:ADMM}
We assume that the local cost functions 
$J_{\agentIdxA,t \rightarrow t+N}^{\iter}(\cdot) + g_i(\cdot)$ in \eqref{eq:LMPC_loci} 
	are closed, proper and convex for all subsystems $\agentIdxA \in \N$,   and 
that the unaugmented Lagrangian $$\mathcal{L}_{i} = J_{\agentIdxA,t \rightarrow t+N}^{\iter}(\lv_\agentIdxA) 
	+ g_i(\lv_i) 
	+ 
	\sum_{\agentIdxB \in \Ni}  \dm_{{ij}}^\trans \left( E_{\agentIdxA \agentIdxB} \lv_\agentIdxA - E_{\agentIdxB \agentIdxA} \lv_\agentIdxB \right)$$ has a saddle point, and that the ADMM update steps in Algortihm~\ref{alg:ADMM} are feasible. 
\end{myas}

In addition to the classical MPC properties, namely, persistent feasibility in each iteration, and asymptotic stability of the equilibria $x_{i,F}$, the following properties hold for the DLMPC in Algorithm~\ref{alg:DLMPC}. 

\begin{mythe}\label{the:properties}
	Consider system \eqref{eq:sys_glob}, with distributed structure \eqref{eq:sys_loc} and  \eqref{eq:const_loc}. Let Assumptions~\ref{as:nonempty} and \ref{as:ADMM} hold.  
	Then, the DLMPC in Algorithm~\ref{alg:DLMPC} has the following properties:  
	\begin{enumerate}
		\item The DLMPC is feasible for all $t \geq 0$ and at every iteration $\iter \geq 1$. The equilibrium points $x_{\agentIdxA,F}$ are asymptotically stable for the closed-loop coupled subsystems under the DLMPC law. 
		\item The iteration cost $J_{0 \rightarrow \infty}^{\iter}(x_S)$ of the closed-loop system does not increase with the iteration index $\iter$, i.e., $J^{\iter+1}_{0\rightarrow\infty} \leq J^{\iter}_{0\rightarrow\infty}$.  
		\item 
		If the closed-loop system under the DLMPC converges to the steady-state inputs  ${\textbf{u}_\agentIdxA^\infty = \lim\limits_{\iter \rightarrow \infty} \textbf{u}_\agentIdxA^\iter}$ and the related steady-state trajectories  ${\textbf{x}_\agentIdxA^\infty = \lim\limits_{\iter \rightarrow \infty} \textbf{x}_\agentIdxA^\iter}$, for all subsystems $i \in \N$,  
		and the conditions from \cite[Theorem~3]{Rosolia2017} are satisfied,  then,  $\textbf{u}_\agentIdxA^\infty$ and  $\textbf{x}_\agentIdxA^\infty$ are global optimal solutions for the IHOCP~\eqref{eq:IHOCP}. 
	\end{enumerate}	
\end{mythe}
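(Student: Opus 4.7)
\begin{mypro}[Proof sketch]
The plan is to reduce the three claims to the corresponding properties of the centralized LMPC in \cite{Rosolia2017a} by establishing that Algorithm~\ref{alg:DLMPC} at each time step computes exactly the solution of the centralized problem \eqref{eq:LMPC_glob}. The reduction rests on two facts: equivalence of the distributed and centralized optimization problems, and convergence of the inner ADMM loop.

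First, I will argue that the collection of local subproblems \eqref{eq:LMPC_loc} together with the consensus constraints \eqref{eq:consensus} is an exact reformulation of \eqref{eq:LMPC_glob}. The stage-cost decomposition \eqref{eq:J_glob} together with the analogous decomposition of the terminal cost $\V^{\iter-1,*}$ (using $J_{\agentIdxA,t\rightarrow\infty}^{\iterit}$ in place of $J_{t\rightarrow\infty}^{\iterit}$) shows that the objectives agree whenever the state consensus of the second line of \eqref{eq:consensus} holds; this consensus also reconciles the coupled dynamics in \eqref{eq:sys_loc} with the global dynamics in \eqref{eq:sys_glob}. The local constraints $\mathcal{X}_{\Ni}$, $\mathcal{U}_{\agentIdxA}$ in \eqref{eq:const_loc} jointly enforce $x_t\in\mathcal{X}$, $u_t\in\mathcal{U}$. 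Finally, the terminal constraint $x_{t+N|t}\in\CS^{\iter-1}$ is recovered from the local terminal constraints $x_{\agentIdxA,t+N|t}\in\CS_{\agentIdxA}^{\iter-1}$ together with the $\boldsymbol{\alpha}$-consensus in the first line of \eqref{eq:consensus}, as spelled out in \eqref{eq:CSCSi}. Therefore the optimal value and the primal solution of the decomposed problem coincide with those of \eqref{eq:LMPC_glob} at every time $t$ and iteration $\iter$.

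Second, Assumption~\ref{as:ADMM} places us in the standard convex-consensus ADMM setting: closed, proper, convex local objectives, existence of a saddle point of the Lagrangian, and feasibility of the update steps. Under these hypotheses the ADMM iterates in Algorithm~\ref{alg:ADMM} are guaranteed to converge to a primal-dual optimizer, as derived in \cite{Stuerz2019b} for the same communication pattern. Consequently, at every $(t,\iter)$ Algorithm~\ref{alg:DLMPC} applies the same control input that would be obtained by centralized solution of \eqref{eq:LMPC_glob}.

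With this equivalence in hand, the three claims are immediate. Claim (1), recursive feasibility at every $t$ and $\iter\geq 1$ together with asymptotic stability of $x_F$ (and hence of each $x_{\agentIdxA,F}$ componentwise), follows from \cite[Theorem~1]{Rosolia2017a}, whose non-emptiness hypothesis on $\CS^{0}$ is provided by Assumption~\ref{as:nonempty} and \eqref{eq:CSCSi}. Claim (2) on the non-increasing iteration cost follows from \cite[Theorem~2]{Rosolia2017a} applied to the closed-loop trajectories produced by the equivalent centralized LMPC, together with the growth property \eqref{eq:CS_growth} of $\CS^{\iter}$. Claim (3) follows directly from \cite[Theorem~3]{Rosolia2017}, since the assumed steady-state inputs and trajectories $\mathbf{u}_{\agentIdxA}^{\infty}$, $\mathbf{x}_{\agentIdxA}^{\infty}$ concatenate, via the state consensus, into steady-state global sequences $\mathbf{u}^{\infty}$, $\mathbf{x}^{\infty}$ that satisfy the hypotheses of that theorem for the centralized problem. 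The main subtlety of the argument is therefore the careful verification in the first paragraph: ensuring that the per-subsystem convex combinations, when forced to share one common coefficient vector $\boldsymbol{\alpha}^{\iter-1}$, reproduce the global convex safe set \eqref{eq:CS} rather than an inner or outer approximation; once this point is settled, the remainder of the proof is a direct invocation of the existing LMPC results.
\end{mypro}
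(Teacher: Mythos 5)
Your proposal is correct and follows essentially the same route as the paper: establish that the decomposed subproblems \eqref{eq:LMPC_loc} with the consensus constraints \eqref{eq:consensus} are an exact reformulation of the centralized problem \eqref{eq:LMPC_glob}, invoke convex consensus ADMM convergence under Assumption~\ref{as:ADMM}, and then transfer the centralized LMPC guarantees of \cite{Rosolia2017,Rosolia2017a}. Your first paragraph in fact spells out the reformulation (in particular the role of the $\boldsymbol{\alpha}$-consensus via \eqref{eq:CSCSi}) more explicitly than the paper, which simply asserts it "can easily be seen"; both arguments share the same mild imprecision of treating the asymptotic ADMM limit as the applied input.
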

\begin{proof} 
The properties of Theorem~\ref{the:properties} have been proven in \cite{Rosolia2017} and \cite{Rosolia2017a} for a single system. 
It therefore suffices to show 
that the proposed decomposed problem solved in Algorithm~\ref{alg:DLMPC} is an exact reformulation of the global centralized problem and that the distributed solution method in Algorithm~\ref{alg:ADMM} converges to the global optimal solution.

	It can easily be seen that the local subproblems in \eqref{eq:LMPC_loc} together with the consensus constraints in \eqref{eq:consensus}, and their reformulation into the subproblems in \eqref{eq:LMPC_loci} are exact reformulations of the global problem in \eqref{eq:LMPC_glob}. This follows from the decomposability of the cost function in \eqref{eq:J_glob} and the structure of the system in \eqref{eq:sys_loc} and  \eqref{eq:const_loc}, together with the definitions of $\CS_i^\iter$ and $\V_i^{\iter,*}$ in \eqref{eq:CS_local} and \eqref{eq:LMPC_loc} with the consensus constraints in \eqref{eq:consensus}. 
	For linear system dynamics and convex constraints, the problems (both the global and the local ones) are convex and therefore admit a global optimal solution. 

	The proposed distributed solution method in Algorithm~\ref{alg:ADMM} is equivalent to the update steps of consensus ADMM in \cite{Boyd2010}. This equivalence has been shown in the derivation of the steps of Algorithm~A.1 in \cite{Banjac2019}. 
	Under Assumption~\ref{as:ADMM}, 
	the residuals 
$E_{\agentIdxA \agentIdxB} \lv_\agentIdxA - E_{\agentIdxB \agentIdxA} \lv_\agentIdxB, ~ \forall \agentIdxB \in \Ni, ~ \forall \agentIdxA \in \N$ 
 in Algorithm~\ref{alg:ADMM} asymptotically converge to zero and the cost 
 $\sum_{\agentIdxA \in \N} (J_{\agentIdxA,t \rightarrow t+N}^{\iter}(\cdot) + g_i(\cdot))$ from \eqref{eq:LMPC_loci} asymptotically converges to the global optimal solution. This is true in each time step  
 and therefore 
$J_{0 \rightarrow\infty}^\iter(x_S) = \sum_{\agentIdxA \in \N} J_{\agentIdxA, 0 \rightarrow\infty}^\iter(x_{\agentIdxA,S})$ 
converges to the global optimal solution. With the previous results, this is equivalent to the global optimal solution of the global centralized problem \eqref{eq:LMPC_glob}. 
 
 Therefore, the proofs in \cite{Rosolia2017} and \cite{Rosolia2017a} for a single system can be applied to the global centralized problem and thus the properties in Theorem~\ref{the:properties} hold. 
\end{proof}

Note that the properties in Theorem~\ref{the:properties} hold for the global system in \eqref{eq:sys_glob}, i.e., for the ensemble of all coupled subsystems.  In particular, property 2) 
guarantees a decrease in the iteration cost $J_{0 \rightarrow\infty}^\iter(x_S) = \sum_{\agentIdxA \in \N} J_{\agentIdxA, 0 \rightarrow\infty}^\iter(x_{\agentIdxA,S})$ of the sum of costs of all subsystems over iterations, rather than a decrease in the iteration costs $J_{\agentIdxA, 0 \rightarrow\infty}^\iter(x_{\agentIdxA,S})$ of the individual subsystems. 
Similarly, the optimal cost function  $J^{\mathrm{LMPC}}_{t \rightarrow t+N}(\cdot) = \sum_{\agentIdxA \in \N} J^{\mathrm{LMPC}}_{\agentIdxA, t \rightarrow t+N}(\cdot)$, is a Lyapunov  function for the equilibrium point $x_F$ of the closed loop system \eqref{eq:sys_glob} rather than the individual cost functions $J^{\mathrm{LMPC}}_{\agentIdxA, t \rightarrow t+N}(\cdot)$ for the individual subsystems. 
Furthermore, Algorithm~\ref{alg:ADMM} enables a distributed implementation of the global terminal constraint set $\CS^{\iter - 1}$ on which no distributed structure is imposed. 
The approach presented in this paper therefore captures the couplings between the subsystems and thus reduces conservatism w.r.t.\ other approaches of distributed MPC in the literature which 
impose structure on the terminal cost or terminal constraint sets.

The size of the decomposed local FHOCPs in \eqref{eq:LMPC_loc} are of the size of the individual subsystems and are independent of the number of subsystems.  
Since only nearest-neighbor communication is required in Algorithm~\ref{alg:ADMM}, also the solution method scales well with the number of subsystems. 
The number of data points for the construction of the convex safe sets in \eqref{eq:CS_local} grows in each iteration with adding the most recent closed loop trajectories to the safe sets in \eqref{eq:SS_i}. 
In order to reduce the required computational effort, the set of data points can be truncated, i.e., not all previously 
seen data points need to be included in the safe sets in \eqref{eq:SS_i}. 
For example only the most recent trajectories, or only the previous trajectory can be chosen to be included. 

\subsection{Safe and Efficient Data Generation and Domain Enlargement of the DLMPC Policy}
\label{sec:DE}

In order to use Algorithm~\ref{alg:DLMPC} for a (possibly iterative) task, data from at least one set of successful feasible trajectories of the subsystems are required to construct the local terminal sets and cost functions, which guarantee the properties in  Theorem~\ref{the:properties}. 
While successful feasible trajectories might be easy to obtain for distributed systems in some applications, such as by locally or manually controlling multiple loosely coupled subsystems in a non-optimal way, in other applications, such as for tightly coupled subsystems with safety-critical constraints, these data might be difficult to generate. 
We therefore propose in the following a distributed algorithm which allows the safe and efficient generation of the data required for the  computation of the terminal sets and costs in Algorithm~\ref{alg:DLMPC}. 
We present this data generation method for a control task from given initial states $x_{\agentIdxA,0}^{\mathrm{des}}$ to the target states $x_{\agentIdxA,F}$ of the subsystems. 
Let us define the following FHOCP, which is similar to the one in \eqref{eq:LMPC_loc} except for a different cost function, and with the initial states $x_{\agentIdxA,0}$ being optimization variables 
\begin{align}\label{eq:DE_loc}
&\min_{\substack{ \mathbf{x}_{\mathcal{N}_\agentIdxA,t,N}, \\ 
		\mathbf{u}_{\agentIdxA,t,N-1}, \\ 
		\boldsymbol{\ccv}_{\agentIdxA}^{\iter-1}}} 
	\| x_{i,0}^\iter - x_{i,0}^{\mathrm{des}} \|_2^2 \notag\\
\text{s.t.} \ & \ x_{\agentIdxA,k+1|t} = A_{\mathcal{N}_\agentIdxA} x_{\mathcal{N}_\agentIdxA, k|t} + B_\agentIdxA u_{\agentIdxA,k|t}, \ k = 0,...,N\!-\!1  \notag\\
&\ x_{\mathcal{N}_\agentIdxA,k|t} \in \mathcal{X}_{\mathcal{N}_\agentIdxA}, \ k = t,...,t\!+\!N\!-\!1 \notag\\
&\ u_{k|t}^{\agentIdxA} \in \mathcal{U}_\agentIdxA, \ k = t,...,t\!+\!N\!-\!1, \\
&\ x_{\agentIdxA,t+N|t} \in \CS_\agentIdxA^{\iter-1},\notag 
\end{align}
with $\mathbf{x}_{\mathcal{N}_\agentIdxA,t,N}$,
$\mathbf{u}_{\agentIdxA,t,N-1}$, $\boldsymbol{\ccv}_{\agentIdxA}^{\iter-1}$ as in \eqref{eq:LMPC_loc_vars}, and where the consensus constraints in \eqref{eq:consensus} have to hold.  
Note that no initial successful feasible trajectories $\mathbf{x}_{\agentIdxA,t}^{0}$ need to be available. Instead, we use only the target states $x_{\agentIdxA,F}$ as initial feasible trajectories and therefore define $\CS_\agentIdxA^0 = x_{\agentIdxA,F}$.

Iteratively solving \eqref{eq:DE_loc} and computing the DLMPC closed-loop trajectories by Algorithm~\ref{alg:DLMPC} enlarges the domain of the DLMPC policy and converges to feasible trajectories starting at $x_{\agentIdxA,0} = x_{\agentIdxA,0}^{\mathrm{des}}$ and ending in $x_{\agentIdxA,0} = x_{\agentIdxA,F}$, which can used as the input to Algorithm~\ref{alg:DLMPC}. 
These steps are summarized in the following Algorithm~\ref{alg:domain}

\begin{algorithm}[t!]
    \SetAlgoLined
	\caption{Efficient and safe distributed data generation for the DLMPC policy}\label{alg:domain} 
		\KwIn{ 
		Terminal states $x_{i,F}$,  
		sets of neighboring systems $\Ni$, desired initial states $x_{i,0}^{\mathrm{des}}$,  $\forall i \in \N$, $r_{\mathrm{max}}$} 
		\textbf{Initialize:} 
		Set $\CS_\agentIdxA^0 = x_{i,F}$\;
		Set iteration count $\iterDE = 1$\; 
		\For{$\agentIdxA \in \N$}{
		\While{$\| x_{i,0}^\iterDE - x_{i,0}^{\mathrm{des}} \|_2^2 \leq \epsilon$ \text{and} $r \leq r_{\mathrm{max}}$}{
		Solve \eqref{eq:DE_loc} to obtain $x_{i,0}^\iterDE$\;  
		Compute $\mathbf{x}_{\agentIdxA,t}^{*},~ \mathbf{u}_{\agentIdxA,t}^{*}$ via Algorithm~\ref{alg:DLMPC} 
		with inputs: 
		$x_{\agentIdxA,S} = x_{i,0}^\iterDE$, 
		$\CS_{\agentIdxA}^0 = \CS_\agentIdxA^{\iterDE-1}$, 
		$J_{\agentIdxA, t \rightarrow \infty}^0 = J_{\agentIdxA, t \rightarrow \infty}^\iterDE$, $\iter_\mathrm{max} = 1$, 
		until line~7 of Algorithm~\ref{alg:DLMPC}, then break\;   
		Update $\CS_\agentIdxA^\iterDE$ by $\mathbf{x}_{\agentIdxA,t}^{*}$\; 
		Compute and save $J_{\agentIdxA, t \rightarrow \infty}^\iterDE(x_{\agentIdxA,t}), \forall x_{\agentIdxA,t} \in \mathbf{x}_{\agentIdxA,t}^{*}$\;  
		$\iterDE = \iterDE + 1$\;  
		}
		Set $\CS_\agentIdxA = \CS_\agentIdxA^\iterDE$\; 
		}
		\KwOut{
		$\CS_\agentIdxA$ containing successful feasible trajectories from $x_{i,0}^{\mathrm{des}}$ to $x_{i,F}$, ~ $\forall i \in \N$} 
\end{algorithm}

\begin{myrem}
	Algorithm~\ref{alg:domain} can also be used to compute a larger domain of the DLMPC policy. 
	If no specific initial states $x_{\agentIdxA,0}^\mathrm{des}$ are given, 
	instead of the cost function $\| x_{i,0}^\iter - x_{i,0}^{\mathrm{des}} \|_2^2$ in \eqref{eq:DE_loc}, a different cost function can be used, for example to compute the initial states $x_{\agentIdxA, 0}$ for all subsystems $\agentIdxA \in \N$ as the points furthest in the direction of interest at the borders of the convex safe sets $\CS_\agentIdxA^\iter$. 
\end{myrem}

\section{Numerical Experiments}
\label{sec:numerics}

In this section, we present numerical examples to demonstrate the methods of the DLMPC scheme in Algorithm~\ref{alg:DLMPC} and the data generation in Algorithm~\ref{alg:domain}. 

We consider a system of three dynamically coupled subsystems with coupled state constraints. 
The subsystems have two states each, i.e., $x_i = [x_{i1}, ~x_{i2}]^\top, ~ \forall i \in \N$. The overall system state is given by $x = [x_1^\top, ~ x_2^\top, ~ x_3^\top]^\top \in \mathbb{R}^6$, and the input vector by $u = [u_1, ~ u_2, ~ u_3]^\top \in \mathbb{R}^3$. 
The system matrices of the global system are given by 
\begin{equation} \label{eq:sys_ex_1_2}
\begin{aligned}
A = \small\bma{@{}c@{\,\,\,\,\,\,}c@{\,\,\,\,\,\,}c@{}}{A_{11} & A_{12} & 0 \\ 
	0 & A_{22} & A_{23} \\ 
	A_{31} & 0 & A_{33} }, \quad
B = \diag{B_{11}, B_{22}, B_{33}}, 
\end{aligned}
\end{equation}
with 
\begin{equation}
\begin{aligned}
A_{11} &=  
\small\bma{@{}c@{\,\,\,\,\,\,}c@{}}{1 & 0.5 \\
	0 & 1.1}, 
~~ 
A_{22} = 
\small\bma{@{}c@{\,\,\,\,\,\,}c@{}}{1.05 & 0.6 \\
	0 & 1}, 
~~
A_{33} = 
\small\bma{@{}c@{\,\,\,\,\,\,}c@{}}{1 & 0.55 \\
	0 & 1.05}, 
\\ 
A_{12} &= A_{23} = A_{31} = -\small\bma{@{}c@{\,\,\,\,\,\,}c@{}}{0.1 & 0.2 \\
	0 & 0.3}, 
\end{aligned}
\end{equation} 
\begin{equation}
\begin{aligned}
-0.9 &\leq x_{11} - x_{21} \leq 0.9, \\ 
-0.9 &\leq x_{21} - x_{31} \leq 0.9, \\ 
-3 &\leq u_{i} \leq 3 \\ 
-5 &\leq x_{ik} \leq 5, ~~ \forall i \in \N, ~~k \in \{1,2\}. 
\end{aligned}
\end{equation}

\subsection{Data Generation}
\label{subsec:numerics_DE}

First, we generate the feasible trajectories required as inputs to Algorithm~\ref{alg:DLMPC}, by making use of Algorithm~\ref{alg:domain}. 
We choose the following desired initial states 
\begin{equation}\label{eq:initialstatesdes12}
\begin{aligned} 
x_{1,0}^\mathrm{des,1} &= [-5, 0]^\top, \qquad \qquad 
x_{1,0}^\mathrm{des,2} = [4, 0]^\top, \\
x_{2,0}^\mathrm{des,1} &= [-4.5, 0]^\top, \,\quad \qquad 
x_{2,0}^\mathrm{des,2} = [4.5, 0]^\top, \\
x_{3,0}^\mathrm{des,1} &= [-4, 0]^\top, \qquad \qquad 
x_{3,0}^\mathrm{des,2} = [5, 0]^\top. 
\end{aligned}
\end{equation}
We iteratively compute the inital states $x_{\agentIdxA,0}$ as those states closest to 
$x_{\agentIdxA,0}^\mathrm{des,1}$ and $x_{\agentIdxA,0}^\mathrm{des,2}$, in an alternating way, thus enlarging the domain of the DLMPC policy in both the negative and positive $x_{\agentIdxA1}$ directions within the feasible region of the state space. 
\begin{myrem}
	Note that if no specific initial states $x_{\agentIdxA,0}^\mathrm{des,1}$ and $x_{\agentIdxA,0}^\mathrm{des,2}$ are defined, a similar result of domain enlargement is achieved by changing the cost function in \eqref{eq:DE_loc} to $x_{i,0}^\iter$ 
	and  
	$-x_{i,0}^\iter$, 
	respectively. 
\end{myrem}

Figure~\ref{fig:DE1} shows the enlargement of the convex hulls of the safe sets, $\CS_i^\iter$, of the three subsystems over iterations $\iter=0$ to 4 of Algorithm~\ref{alg:domain}. At iteration 4, the given initial states $x_{i,0}^\mathrm{des,1}$ and $x_{i,0}^\mathrm{des,2}$ have been reached, i.e., closed-loop trajectories from  $x_{i,0}^\mathrm{des,1}$ to $x_{i,F}$ and from $x_{i,0}^\mathrm{des,2}$ to $x_{i,F}$ under the DLMPC law have been generated. 
\colorlet{mygreen}{black!20!green}
\colorlet{myblue}{black!20!blue}
\colorlet{myred}{black!20!red}
\begin{figure}[h] 
	\centering
%
%
\begin{tikzpicture}
\colorlet{mygreen}{black!20!green}
\colorlet{myblue}{black!20!blue}
\colorlet{myred}{black!20!red}

\begin{axis}[%
width=2.576772in,
height=1.621242in,
at={(0.767716in,0.492143in)},
scale only axis,
xmin=-5.2,
xmax=5.2,
xtick={-5, -4, -3, -2, -1,  0,  1,  2,  3,  4,  5},
xlabel={\parbox{6.5cm}{\centering
\textcolor{myred}{$x_{11,t}$},\, \textcolor{mygreen}{$x_{21,t}$},\, \textcolor{myblue}{$x_{31,t}$} \\[0.1cm]
States $x_{i1,t}$ for subsystems $i\in\{1,2,3\}$}
},
ymin=-1.7,
ymax=1.7,
ytick={  -3, -2.5,   -2, -1.5,   -1, -0.5,    0,  0.5,    1,  1.5,    2,  2.5,    3},
ylabel={\parbox{6.5cm}{\centering
\textcolor{myred}{$x_{12,t}$},\, \textcolor{mygreen}{$x_{22,t}$},\, \textcolor{myblue}{$x_{32,t}$} \\[0.1cm]
States $x_{i2,t}$ for subsys.\,$i$}
},
every tick label/.append style={font=\scriptsize},
yticklabel style = {xshift=-0.5ex},
xticklabel style = {yshift=-0.5ex},
xlabel style = {font=\footnotesize},
ylabel style = {font=\footnotesize},
ylabel style = {yshift=-1.5ex},
xlabel style = {yshift=0.5ex},
axis x line*=bottom,
axis y line*=left,
legend style={legend cell align=left,align=left,draw=white!15!black}
]
\addplot [color=black!20!red,solid,mark=o,mark options={solid},forget plot, mark size=1.5pt]
  table[row sep=crcr]{%
0	0\\
-3.23717987118578	0\\
-2.93621450804356	0.49999999985111\\
-2.5021441457275	0.899999999849395\\
-1.97634771808241	1.12172177540821\\
-1.43734679967697	0.941305968730995\\
-1.01182991120281	0.699663445081963\\
-0.701312395000164	0.496504718569322\\
-0.4815559482926	0.344964926036569\\
-0.328346126078673	0.2368961146124\\
-0.222577933401542	0.161408972740089\\
-0.150143971859417	0.10931406971776\\
-0.100882290078729	0.0736755368505015\\
-0.0675770239588253	0.0494662984538095\\
};
\addplot [color=black!20!blue,solid,mark=o,mark options={solid},dotted,forget plot, mark size=1.5pt]
  table[row sep=crcr]{%
0	0\\
-3.00965363142224	0\\
-2.84070362381659	0.49999999995575\\
-2.48624950520563	0.864142614000891\\
-2.01438502428611	1.11649235863636\\
-1.45831903040353	0.954839994658458\\
-0.990737476761788	0.691939770072587\\
-0.655365464148921	0.470162294959944\\
-0.429443362115485	0.311084885079533\\
-0.281071958710147	0.203935302500419\\
};
\addplot [color=black!20!green,solid,mark=o,mark options={solid},dashed,forget plot, mark size=1.5pt]
  table[row sep=crcr]{%
0	0\\
-3.19432689176762	0\\
-2.87060890464905	0.452857953182814\\
-2.42791557956436	0.825500850885859\\
-1.90367569697427	1.01738869891179\\
-1.37082149584619	0.847443206088857\\
-0.94925424627582	0.625902001100891\\
-0.643757843566442	0.438913930458642\\
-0.431524886028037	0.299799240697058\\
-0.287472694022709	0.201883908709405\\
-0.190981154547149	0.134921877233672\\
-0.126798123276493	0.0898460066614652\\
-0.0842312363702972	0.0597560594439265\\
-0.0560122820383651	0.0397463930666545\\
};
\addplot [color=black!20!red,solid,mark=o,mark options={solid},forget plot, mark size=1.5pt]
  table[row sep=crcr]{%
-3.23717987118578	0\\
-2.93621450804356	0.49999999985111\\
-2.5021441457275	0.899999999849395\\
-1.97634771808241	1.12172177540821\\
3.12157743856579	0\\
2.82691115428918	-0.457919698676393\\
2.42271971972479	-0.853711668546702\\
1.92791715679935	-1.07300621014665\\
-3.23717987118578	0\\
};
\addplot [color=black!20!blue,solid,mark=o,mark options={solid},dotted,forget plot, mark size=1.5pt]
  table[row sep=crcr]{%
-3.00965363142224	0\\
-2.84070362381659	0.49999999995575\\
-2.48624950520563	0.864142614000891\\
-2.01438502428611	1.11649235863636\\
2.94666284276611	0\\
2.75231585223928	-0.499999999988681\\
2.3794672865583	-0.850000000018702\\
1.90211182821734	-1.08371277142778\\
-3.00965363142224	0\\
};
\addplot [color=black!20!green,solid,mark=o,mark options={solid},dashed,forget plot, mark size=1.5pt]
  table[row sep=crcr]{%
-3.19432689176762	0\\
-2.87060890464905	0.452857953182814\\
-2.42791557956436	0.825500850885859\\
-1.90367569697427	1.01738869891179\\
3.41680132665141	0\\
3.10464358279483	-0.499999999898748\\
2.63853640715688	-0.8876240902902\\
2.07881351923413	-1.12502775723375\\
-3.19432689176762	0\\
};
\addplot [color=black!20!red,solid,mark=asterisk,mark options={solid},forget plot, mark size=1.5pt]
  table[row sep=crcr]{%
1.92791715679935	-1.07300621014665\\
-4.99975572837407	0\\
-4.54975572846743	0.499999999998057\\
-3.96725572857324	0.899999999996295\\
-3.31172550108059	1.20845615586098\\
-2.62371152518317	1.44779023375646\\
4.00000000015621	0\\
3.55000000020711	-0.499999999984318\\
2.97749774686956	-0.828738814327315\\
1.72791715679935	-1.17300621014665\\
-4.99975572837407	0\\
};\label{15}
\addplot [color=black!20!blue,solid,mark=asterisk,mark options={solid},dotted,forget plot, mark size=1.5pt]
  table[row sep=crcr]{%
-4.49999999906636	0\\
-4.32499999894963	0.49999999999901\\
-3.93226123581877	0.938479480436875\\
-3.375624849458	1.26888293489436\\
-2.69733304658831	1.49403691514993\\
4.499999999491	0\\
4.22502253345228	-0.499999999999229\\
3.77629619410961	-0.849999999911379\\
3.22313353778497	-1.08749999993378\\
2.62016362083188	-1.23646149346097\\
1.99635172994774	-1.23689052332632\\
-4.49999999906636	0\\
};\label{25}
\addplot [color=black!20!green,solid,mark=asterisk,mark options={solid},dashed,forget plot, mark size=1.5pt]
  table[row sep=crcr]{%
-1.90367569697427	1.01738869891179\\
-4.00000000070051	0\\
-3.03226123582069	0.565321818462405\\
-2.5046086628083	0.823587909374282\\
-1.90367569697427	1.01738869891179\\
4.99977466013269	0\\
4.59977466011707	-0.499999999979803\\
4.06977466010433	-0.874999999635682\\
3.45652264848321	-1.17012835483126\\
2.76059305802525	-1.31558064976689\\
-4.00000000070051	0\\
};\label{35}
\addplot [color=black!20!red,solid,mark=s,mark options={solid},forget plot, mark size=1.5pt]
table[row sep=crcr]{%
	0	0\\
	0	0\\
};
\addplot [color=black!20!blue,solid,mark=s,mark options={solid},dotted,forget plot, mark size=1.5pt]
table[row sep=crcr]{%
	0	0\\
	0	0\\
};
\addplot [color=black!20!green,solid,mark=s,mark options={solid},dashed,forget plot, mark size=1.5pt]
table[row sep=crcr]{%
	0	0\\
	0	0\\
};
\end{axis}
\end{tikzpicture}%
	\caption{ \label{fig:DE1} 
		Domain enlargement for desired initial states $x_{i,0}^{\mathrm{des,1}}$, $x_{i,0}^{\mathrm{des,2}}$:  
		$\CS_i^\iter$ of subystems $i=1$ 
		(solid red \,\textcolor{myred}{-----}\,), 
		$i=2$ (dashed green \,\textcolor{mygreen}{- - -}\,), 
		$i=3$ (dotted blue \,\textcolor{myblue}{......}\,) over 
		iterations $\iter=1,2$ (\,\textbf{$\circ$}\,), $\iter=3,4$ (\,\textbf{$\ast$}\,), starting with $\CS_i^0 = [0,\,0]^\top$. 
		}
\end{figure}

\subsection{Iterative Control Task}
We consider now the control task to steer the coupled subsystems in \eqref{eq:sys_ex_1_2} from $x_{\agentIdxA,0}^{\mathrm{des,1}}$ as in \eqref{eq:initialstatesdes12} to $x_{\agentIdxA,F} = [0,\,0]^\top$. 
The cost function is given as in \eqref{eq:hi_loc_uncoup} with $Q_i = I_{n_i}$ and $R_i = I_{m_i}$. 
We generate the first feasible trajectory by a distributed MPC with time horizon $N=15$, without terminal sets and constraints, and with $Q_i = 0.1 I_{n_i}$ and $R_i = I_{m_i}$. 
In real, possibly safety-critical, applications a feasible trajectory could be obtained by manual control of the subsystems, or with the data generation in Algorithm~\ref{alg:domain}, as illustrated in Section~\ref{subsec:numerics_DE}. 
Table~\ref{tab:IT2} shows the performance improvement over iterations of Algorithm~\ref{alg:DLMPC} with a time horizon of $N=4$. Figure~\ref{fig:IT2} shows the resulting closed-loop trajectories. 
It is interesting to note that while the iteration costs of subsystems 2 and 3 are decreasing over the iterations, the one of subsystem 1 is increasing. As noted before, Theorem~\ref{the:properties} guarantees that the sum, i.e., the iteration cost of the overall system is guaranteed to be non-increasing. 
\begin{table}
	\centering 
	\scriptsize
	\begin{tabular}{cccccc}
		\multicolumn{1}{c}{Iteration} &
		\multicolumn{1}{c}{0} & \multicolumn{1}{c}{1} & \multicolumn{1}{c}{2} & \multicolumn{1}{c}{3} & \multicolumn{1}{c}{4} 
		\\[-0.3cm] \\\hline \\[-0.2cm] 
		Sys. & 295.63 & 216.96 & 216.41 & 216.31 & 216.28 \\
		Subsys.\,1 & 112.47 & 87.97 & 88.07 & 88.22 & 88.31 \\
		Subsys.\,2 & 113.05 & 76.52 & 76.10 & 75.90 & 75.78 \\
		Subsys.\,3 & 70.11 & 52.47 & 52.23 & 52.19 & 52.19 \\[0.2cm]
		\multicolumn{1}{c}{5} &
		\multicolumn{1}{c}{6} & \multicolumn{1}{c}{7} & \multicolumn{1}{c}{8} & \multicolumn{1}{c}{9} & \multicolumn{1}{c}{10} 
		\\[-0.3cm] \\\hline \\[-0.2cm] 
		216.26 & 216.26 & 216.25 & 216.25 & 216.25 & $216.25^*$ \\
		88.36 & 88.38 & 88.40 & 88.41 & 88.42 & 88.42 \\
		75.72 & 75.68 & 75.66 & 75.64 & 75.63 & 75.63 \\
		52.19  & 52.19 & 52.19  & 52.20 & 52.20 & 52.20
	\end{tabular}
	\caption{Iteration Costs 
		$^*$converged to the global optimal solution (computed for the centralized system with $N=200$).}
	\label{tab:IT2}
\end{table}
\begin{figure}
	\centering\input{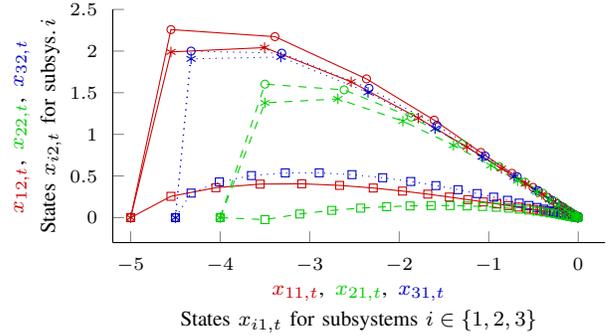} 
	\caption{ \label{fig:IT2} 
		Iterative regulation task for 3 dynamically coupled subsystems.  
		Trajectores of subystems $i=1$ 
		(solid red \,\textcolor{myred}{-----}\,), %
		$i=2$   (dashed green\,\textcolor{mygreen}{- - -}\,), 
		$i=3$   (dotted blue\,\textcolor{myblue}{......}\,) shown for  
		iterations $q=0$ (\,\textbf{$\square$}\,), $q=1$  (\,\textbf{$\circ$}\,), $q=10$  (\,\textbf{$\ast$}\,). }
\end{figure}

\section{Conclusion}
\label{sec:con}
A distributed learning model predictive control scheme was presented, which exploits data in order to construct the terminal cost and constraints of the DMPC problem without imposing the distributed structure of the system. The required computation is done online in a distributed way. 
It was shown how the scheme can be used to safely explore the state-space and generate the required data or exploit data from iterative control tasks. In addition to recursive feasibility and asymptotic stability,  performance improvement over iterations and convergence to the global centralized optimal solution under mild conditions are guaranteed.





\bibliographystyle{IEEEtran}
\bibliography{IEEEabrv,bibliography}

\end{document}